\documentclass[a4paper,12pt]{article}
\usepackage{a4wide}
\usepackage{amssymb, amsmath}
\usepackage[amsmath,hyperref,thmmarks]{ntheorem}
\usepackage{t1enc,times,graphicx}
\usepackage[utf8]{inputenc}
\usepackage[english]{babel}
\usepackage{enumerate}
\usepackage{hyperref}
\usepackage{color,caption,subcaption}
\usepackage{natbib}
\graphicspath{{results2/pnl}}

\emergencystretch=\hsize
\def\ca{{\mathcal A}}

\def\cf{{\mathcal F}}

\def\ch{{\mathcal H}}

\def\cp{{\mathcal P}}

\def\ct{{\mathcal T}}

\def\E{{\mathbb E}}
\def\H{{\mathbb H}}

\def\L{{\mathbb L}}
\def\N{{\mathbb N}}
\def\P{{\mathbb P}}

\def\R{{\mathbb R}}

\def\s{\star}

\def\ind#1{{\bf 1}_{\left\{#1\right\}}}

\def\Var{\mathop{\rm Var}\nolimits}

\def\Span{\mathop{\rm Span}\nolimits}

\def\inv#1{\mathop{\frac{1}{ #1}}\nolimits}

\theoremstyle{plain}
\newtheorem{theorem}{Theorem}[section]
\newtheorem{proposition}[theorem]{Proposition}

\newtheorem{remarkh}[theorem]{Remark}

\theoremstyle{nonumberplain}
\theoremheaderfont{\normalfont\itshape}
\theorembodyfont{\normalfont}
\theoremseparator{.}
\theoremsymbol{\ensuremath{\blacksquare}}
\newtheorem{proof}{Proof}

\makeatletter
\newcounter{hypo}
\renewcommand{\thehypo}{(${\mathcal H}$-\arabic{hypo})}
\newcommand{\dohypo}{\thehypo}

\makeatother

\title{How can the dual martingale help solving the primal optimal stopping problem? }
\date{\today}
\author{Aurélien Alfonsi\thanks{ CERMICS, ENPC, Institut Polytechnique de Paris, CNRS, Marne-la-Vallée, France \& MathRisk team-project, Inria Paris, France. \newline \texttt{aurelien.alfonsi@enpc.fr}} \and Ahmed Kebaier\thanks{LaMME, CNRS, UMR 8071, Université \'Evry Paris Saclay, 91037, \'Evry, France. \newline \texttt{ahmed.kebaier@univ-evry.fr}} 
\and Jérôme Lelong\thanks{ Univ. Grenoble Alpes, CNRS, Grenoble INP, LJK, 38000 Grenoble, France. \newline\texttt{jerome.lelong@univ-grenoble-alpes.fr} \newline{Acknowledgements: AA and AK benefited from the support of the “chaire Risques financiers”, Fondation du Risque.}}}

\begin{document}
\maketitle
\begin{abstract}
Motivated by recent results on the dual formulation of optimal stopping problems, we investigate in this short paper how the knowledge of an approximating dual martingale can improve the efficiency of primal methods. In particular, we show on numerical examples that accurate approximations of a dual martingale efficiently reduce the variance for the primal optimal stopping problem.
\end{abstract}
\noindent {\bf Keywords:} Optimal stopping, Variance Reduction, Pure dual algorithm, Martingale, Least square Monte Carlo, Bermudan option.\\
\noindent {\bf AMS subject classifications (2020):} 62L15, 60G40, 91G20, 65C05.

\section{Introduction and framework }

Let $N\in \N^*$ and $(\Omega, \overline{\cf}, \cf=(\cf_n)_{0 \le n \le N}, \P)$ be a filtered probability space with a discrete time filtration.
We consider a market with $d$ assets $(S^k_n,n\ge 0)$, $k\in \{0,\dots, d\}$, with $S^0$ being the risk-free asset.  We are interested in the pricing and hedging of a Bermudan option paying $\Psi(S^1_{n},\dots,S^d_{n})$ if exercised at time $n$. We denote by
\begin{equation}\label{eq:Z_Bermudan}
  Z_n=\frac{\Psi(S^1_{n},\dots,S^d_{n})}{S^0_n}
\end{equation}
the discounted payoff and we assume that  $\max_{0\le n\le N}\E[|Z_n|^2]<\infty$. 
We define the Snell envelope, for $n\in \{0,\dots,N\}$,
\begin{equation}
  \label{eq:price}
  U_{n} = \sup_{\tau \in \ct_{n, N}} \E[ Z_{\tau} | \cf_{n}],
\end{equation}
where $\ct_{n,N}$ denotes the set of $\cf$-stopping times taking values in $\{n,\dots,N\}$.
It classically solves the  dynamic programming equation
\begin{equation}
  \label{eq:dpp}
  \begin{cases}
    U_{N} & = Z_{N} \\
    U_{n} & = \max\left( Z_{n}, \E[U_{n+1} | \cf_{n}] \right), \quad 0 \le n  \le N-1.
  \end{cases}
\end{equation}
The process $U$ is a $L^2$ supermartingale and $\tau^\s:=\inf \{n \in \{0,\dots, N\}: U_n=Z_n \}$ is an optimal stopping time for~\eqref{eq:price} satisfying that $(U_{n\wedge \tau^\s})_{0 \le n \le N}$ is a martingale.
The optimal policy~$\tau^\s$   is obtained from the celebrated Longstaff Schwarz algorithm (see~\cite{LS})
\begin{equation}
  \label{eq:dpp-policy}
  \begin{cases}
    \tau_{N} & = N \\
    \tau_{n} & = n \ind{Z_n \ge \E[Z_{\tau_{n+1}} | \cf_{n}]} + \tau_{n+1} \ind{Z_n < \E[Z_{\tau_{n+1}} | \cf_{n}]}, \quad 0 \le n  \le N-1.
  \end{cases}
\end{equation}
Then, $\tau_n$ is the optimal stopping time on the time interval $\{n,\dots,N\}$, and satisfies $\sup_{\tau \in \ct_{n, N}} \E[ Z_{\tau} | \cf_n] = \E[Z_{\tau_n} | \cf_n]$. The stopping time $\tau_0$ is thus equal to $\tau^\s$.

Equations~\eqref{eq:price}, \eqref{eq:dpp} and~\eqref{eq:dpp-policy} are often referred to as the primal approach.  The dual approach was introduced by~\cite{rogers-02,HK} and consists in writing 
\begin{align}
  \label{eq:dual-price-k}
  U_{n} & = \inf_{M \in \H^2} \E\left[ \max_{n \le j \le N} \{Z_{j} - (M_{j}- M_{n})\}\bigg| \cf_{n}\right],
\end{align}
where $\H^2$ is the set of square integrable $\cf$-martingales.
The infimum is achieved, among others, by the martingale~$M^\s$ coming from the  Doob-Meyer decomposition
\begin{equation}
  \label{eq:doob-meyer}
  U_{n} = U_0 + M^\s_{n} - A^\s_{n},
\end{equation}
where $M^\s\in \H^2$ vanishes at $0$ and $A^\s$ is a predictable, nondecreasing and square integrable process also vanishing at $0$. There are many works in the literature proposing approximations of the martingale $M^\s$. For instance, \cite{HK}, \cite{AB}  and~\cite{BBS} construct these approximations by first numerically solving the primal problem:  \cite{HK} and~\cite{BBS} need an approximation of~$U$, while~\cite{AB} need an approximation of $\tau^\s$. In contrast, many other papers such as \cite{DFM}, \cite{BHS} and~\cite{Lelong} propose to directly solve the dual problem~\eqref{eq:dual-continuation-value} to get an approximation of one solution of~\eqref{eq:dual-price-k}. Recently, we have proposed in \cite{AKL} a new purely dual formulation, the only solution of which is $M^\s$. It is approximated by using only tradable instruments, which makes it also relevant for hedging.  

The aim of this short paper is to investigate how the knowledge of this approximation may help to solve the primal problem~\eqref{eq:dpp}. In particular, we propose a  new variance reduction technique motivated by the following remark: $Z_{\tau^\s}-M^\s_{\tau^\s}=U_{\tau^\s}-M^\s_{\tau^\s}=U_0$ is deterministic. In fact, we have $U_0=\E[Z_{\tau^\s}]= \E[U_{\tau^\s}]=U_0-\E[A^\s_{\tau^\s}]$, which gives that $A^\s_{\tau^\s}=0$ almost surely. Thus, if both $M^\s$ and $\tau^\s$ were exactly known, only one simulation would be necessary to compute $U_0$, which is the paramount of variance reduction! Now, suppose that we have at hand a martingale approximation $\hat{M}$ of $M^\s$ and that $\hat{\tau}$ is an approximation of $\tau^\s$, then a natural control variate reduction method is to compute a Monte-Carlo approximation of $\E[Z_{\hat{\tau}}-\lambda \hat{M}_{\hat{\tau}}]$, where $\lambda$ minimizes the variance of $Z_{\hat{\tau}}-\lambda \hat{M}_{\hat{\tau}}$. The idea of finding such a martingale for variance reduction purposes has already been considered in the literature, see e.g.~\cite{JuKa}. Here, we want to leverage on our recent results in~\cite{AKL} that presents a generic algorithm to approximate $M^\s$.

The paper is structured as follows. In Section~\ref{sec:main}, we present different ways of using the control variate method based on the approximated dual martingale. Then, we show on different numerical examples the relevance of the method, which provides a much more accurate estimator of~$U_0$. We also investigate if directly solving the Longstaff Schwartz algorithm on $Z-\hat{M}$ rather than on $Z$ may improve the accuracy. Up to our knowledge, this direction has not been much investigated in the literature, but it turns out in our case that it has little effect on the pricing accuracy.  In Section~\ref{sec:proxy}, we propose a random time approximation of the optimal policy that uses only $\hat{M}$. Last, the appendix sections recall the approximation $\hat{M}$ developed in~\cite{AKL} and also analyse the variance decomposition in the Longstaff Schwartz algorithm.


\section{The dual martingale  as a control variate for the primal problem}\label{sec:main}
\subsection{Methodology}

We start by a simple remark regarding the primal problem. Let us consider $M \in \H^2$, such that $M_0 = 0$. It is clear from the optional stopping theorem, that $\sup_{\tau \in \ct_{0, N}} \E[ Z_{\tau}] = \sup_{\tau \in \ct_{0, N}} \E[ Z_{\tau} - M_{\tau}]$. Hence, the Bermudan option price~$U_0$  can also be obtained by applying the Longstaff Schwartz algorithm either to the discounted payoff $Z_n$ or alternatively to $Z_n - M_n$.

A first idea would be then to use the martingale $\hat{M}$ developed in~\cite{AKL} (see also~\eqref{eq:hat_M}) and to implement the Longstaff Schwartz algorithm on the option with discounted payoff $Z - \hat{M}$. Note that as $\hat{M}$ is an approximation of $M^\s$, we expect that the least squares approximation of $\E[Z_{\tau_{n+1}} - \hat{M}_{\tau_{n+1}}| \cf_{n}]$ will have very little variance. Let us be more precise and denote by $\cp_n$ the $L^2$ projection on a finite dimensional vector subspace of $L^2(\Omega,\cf_n)$, which is used for the practical implementation of  the Longstaff-Schwartz algorithm (more generally, $\cp_n$ may be any function from $L^2(\Omega,\cf_N)$ to $L^2(\Omega,\cf_n)$ approximating the conditional expectation). Then, \eqref{eq:dpp-policy} is approximated by
\begin{equation*}
  \begin{cases}
    \tau^{LS'_2}_{N} & = N \\
    \tau^{LS'_2}_{n} & = n \ind{Z_n-\hat{M}_n \ge \cp_n \left(Z_{\tau^{LS'_2}_{n+1}}-\hat{M}_{\tau^{LS'_2}_{n+1}}\right)} + \tau^{LS'_2}_{n+1} \ind{Z_n-\hat{M}_n < \cp_n\left(Z_{\tau^{LS'_2}_{n+1}}-\hat{M}_{\tau^{LS'_2}_{n+1}} \right)},
  \end{cases}
\end{equation*}
for  $0 \le n  \le N-1$, but also by
\begin{equation}\label{eq:def_tauLS2}
  \begin{cases}
    \tau^{LS_2}_{N} & = N \\
    \tau^{LS_2}_{n} & = n \ind{Z_n \ge \cp_n\left(Z_{\tau^{LS_2}_{n+1}}-\hat{M}_{\tau^{LS_2}_{n+1}} +\hat{M}_n\right)} + \tau^{LS_2}_{n+1} \ind{Z_n < \cp_n\left(Z_{\tau^{LS_2}_{n+1}}-\hat{M}_{\tau^{LS_2}_{n+1}}+\hat{M}_n \right)},
  \end{cases}
\end{equation}
which are not equivalent since we typically have $\cp_n(\hat{M}_n) \not = \hat{M}_n$. In practice, for Markovian models, i.e. when $S$ introduced in~\eqref{eq:Z_Bermudan} is a Markov chain, we usually take $\cp_{n}$ as the projection on the vector space generated by polynomial functions of~$S_n$ with a bounded degree. To underline this, note that we do not have $Z_n-\hat{M}_n\ge \E[Z_{\tau_{n+1}}-\hat{M}_{\tau_{n+1}}|S_n]\iff Z_n \ge \E[Z_{\tau_{n+1}}|S_n]$ since in general $\E[\hat{M}_{n}|S_n] \not = \hat{M}_n$, but we do have $Z_n\ge \E[Z_{\tau_{n+1}}-\hat{M}_{\tau_{n+1}}+\hat{M}_n|S_n]\iff Z_n \ge \E[Z_{\tau_{n+1}}|S_n]$ since $\E[\hat{M}_{\tau_{n+1}}-\hat{M}_n|S_n] = 0$. This formally explains why it is much better to use~\eqref{eq:def_tauLS2} to approximate~\eqref{eq:dpp-policy}\footnote{For example on the Bermudan put case of Table~\ref{Table:Put_local}, with the same parameters as in the first line of the table, we obtain with $\tau_0^{LS'_2}$ a Bermudan put price of 9.5831 (below than the European price 9.6642 !) instead of 9.9064 with $\tau_0^{LS_2}$, which is closer to the real price since the price is approximated from below. Thus, in all our numerical experiments we will consider~\eqref{eq:def_tauLS2}.}. We will simply denote by $\tau^{LS_2}=\tau^{LS_2}_{0}$ the approximation of the optimal stopping time given by the algorithm. We will compare this stopping time with the classical stopping time computed on~$Z$ and defined by
\begin{equation}\label{eq:def_tauLS1}
  \begin{cases}
    \tau^{LS_1}_{N} & = N \\
    \tau^{LS_1}_{n} & = n \ind{Z_n \ge \cp_n\left(Z_{\tau^{LS_1}_{n+1}}\right)} + \tau^{LS_1}_{n+1} \ind{Z_n < \cp_n\left(Z_{\tau^{LS_1}_{n+1}} \right)}, \quad 0 \le n  \le N-1,
  \end{cases}
\end{equation}
and set $\tau^{LS_1}=\tau^{LS_1}_{0}$.

In practice, the projection $\cp_n$ can barely be computed in a closed form. We denote by $\hat{\cp}_n^Q$ its empirical approximation using $Q$ samples of $Z$ (and $\hat{M}$ if needed). This leads to the following implementable approximation of $(\tau^{LS_1}_n)_{0 \le n \le N}$
\begin{equation*}
  \begin{cases}
    \hat{\tau}^{LS_1,Q}_{N} & = N \\
    \hat{\tau}^{LS_1,Q}_{n} & = n \ind{Z_n \ge \hat{\cp}_n^Q\left(Z_{\hat{\tau}^{LS_1,Q}_{n+1}}\right)} + \hat{\tau}^{LS_1,Q}_{n+1} \ind{Z_n < \hat{\cp}_n^Q\left(Z_{\hat{\tau}^{LS_1,Q}_{n+1}} \right)}, \quad 0 \le n  \le N-1.  \end{cases}
\end{equation*}
We define $\hat{\tau}^{LS_2,Q}_{N}$ in a similar way.

We advise to use three different sets of samples to avoid potential issues of bias and overfitting that are already well known in the Longstaff Schwartz pricing method alone. The first sample is used to compute the coefficients $\alpha^{Q_1}_n$ for $n \in \{1, \dots, N\}$, which define the approximation~$\hat{M}_n$, see Eq.~\eqref{def_alpha_n_Q} and~\eqref{eq:hat_M}. Then, we implement the Longstaff Schwartz algorithm on the second one to get $\hat{\tau}^{LS_i, Q_2}$, $i\in \{1,2\}$. Then, we generate a third sample to calculate out-of-sample Monte-Carlo prices to approximate $\E[Z_{\hat{\tau}^{LS_i,Q_2}}]$. More precisely, we use here a control variate variance reduction and compute 
\begin{equation}\label{eq:prix_empirique}
\frac 1 {Q_3} \sum_{q=1}^{Q_3} Z^{(q)}_{\hat{\tau}^{LS_i,Q_2,(q)}} -\hat{\lambda} \hat{M}^{(q)}_{\hat{\tau}^{LS_i,Q_2,(q)}},
\end{equation} where $\hat{\lambda}=\frac{\sum_{q=1}^{Q_3} Z^{(q)}_{\hat{\tau}^{LS_i,Q_2,(q)}} \hat{M}^{(q)}_{\hat{\tau}^{LS_i,Q_2,(q)}}}{ \sum_{q=1}^{Q_3} (\hat{M}^{(q)}_{\hat{\tau}^{LS_i,Q_2,(q)}})^2 }$  minimizes the asymptotical variance. Here, we denote by $Q_3$ the number of samples in the second and third sets that are used for the Longstaff Schwartz algorithm and the Monte Carlo estimation. 

However, let us note that for a given martingale $\hat{M}$, the statistical error comes from two different sources: the first one is linked to the approximation of $\tau^{LS_i}$ by $\hat{\tau}^{LS_i,Q_2}$ and the second one is the classical statistical error associated with the computation of $\E[Z_{\tau^{LS_i}}]$. This is analysed in~\cite[Theorem 4.2]{CLP} where a CLT is shown, see Appendix~\ref{app:variance} for further discussion.

\subsection{Numerical examples}

\subsubsection{One dimensional options}
\label{sec:var-1d}

We start our discussion by considering a Bermudan put option with payoff function
$$ \Psi(S) = (K - S)_+.$$
The calculation of the approximating martingale~$\hat{M}$ is carried out by using the Algorithm presented in~\cite{AKL}. Appendix~\ref{Sec:CalcM} recalls the main steps of this algorithm and its parameters: $\bar{N}$ is the number of subticks, $P$ the number of local functions used in the regression and $Q_1$ is the number of samples. In Table~\ref{Table:Put_local}, we have reported the prices obtained by three different primal methods.
\begin{enumerate}
  \item The classical Longstaff Schwartz method without any variance reduction. The price and the standard deviation of the Monte-Carlo estimator are given in the caption (multiplied by $1.96$, this standard deviation gives the half-width of the 95\% asymptotic confidence interval).
  The standard deviation is estimated from 40 independent runs of the Longstaff Schwartz algorithm. 
  \item  The Longstaff Schwartz method applied to~\eqref{eq:def_tauLS2} to determine the optimal stopping rule, and then the control variate reduction method $Z_{\hat{\tau}^{LS_2,Q_2}}-\lambda \hat{M}_{\hat{\tau}^{LS_2,Q_2}}$ to calculate the price. The price and the standard deviation (computed on 40 independent runs) of the Monte-Carlo estimator are given in the 4th and 5th columns.
  \item The classical Longstaff Schwartz method applied to~\eqref{eq:def_tauLS1} to determine the optimal stopping rule, and then the control variate reduction method $Z_{\hat{\tau}^{LS_1,Q_2}}-\lambda \hat{M}_{\hat{\tau}^{LS_1,Q_2}}$ to calculate the price. The price and the standard deviation (computed on 40 independent runs) of the Monte-Carlo estimator are given in the 6th and 7th columns.
\end{enumerate}  
The last two columns indicate the price and standard deviation given by the dual method presented in~\cite{AKL}. This indicates the quality of the approximation of $M^\s$: heuristically, the smaller is the dual price the better is the martingale $\hat{M}$.


\begin{table}[h!]
  \centering
  \begin{tabular}{|rrr|rr|rr||rr|}
\hline 
      $Q_1$ &  $\bar{N}$\phantom{$\Big|$} &  $P$ &      LS Eq.~\eqref{eq:def_tauLS2} & stddev &   LS Eq.~\eqref{eq:def_tauLS1}  & stddev &     Dual & stddev \\
\hline
 $10^5$ &  1 & 50 &   9.9064 & 0.0092 &        9.9025 & 0.0100 &  10.3159 & 0.0087 \\
 $10^5$ &  5 & 50 &   9.9065 & 0.0045 &        9.9045 & 0.0060 &  10.0787 & 0.0050 \\
$2\cdot10^6$ & 20 & 50 &    9.9072 & 0.0007 &        9.9071 & 0.0005 &   9.9625 & 0.0006 \\
\hline
\end{tabular}
  \caption{Prices for a put option using a basis of $P$ local functions with $K = S_0 = 100$, $T = 0.5$, $r=0.06$, $\sigma=0.4$ and $N=10$ exercising dates. Longstaff Schwartz algorithm is run with $Q_2=Q_3=50000$ samples and  with a polynomial approximation of order 6. The LS price without variance reduction is {\bf $9.90$}, with a {\bf standard deviation} $\mathbf{0.0513}$. The value of $\lambda$ in the control variate method is around $0.99$ for each case.    \label{Table:Put_local}}
\end{table}
First, we observe that the use of the martingale~$\hat{M}$ indeed allows to significantly reduce the variance, from  a factor between $25$ up to $10^4$ depending on the accuracy of $\hat{M}$. Second, we notice that there is almost no difference between using the Longstaff Schwartz algorithm with~\eqref{eq:def_tauLS2}  or~\eqref{eq:def_tauLS1}: the main variance gain is given by using the control variate method afterwards. In the next examples, we have observed exactly the same thing, so we only present the results with $\hat{\tau}^{LS_1}$. However, we have noticed that the regressors of the Longstaff Schwartz method are often less noisy when using~\eqref{eq:def_tauLS2} instead of \eqref{eq:def_tauLS1}, but this has almost no incidence on the price computation and its variance, see Appendix~\ref{app:variance}. Last, we observe on this example that the optimal value of $\lambda$ is around $0.99$. This is not surprising, since the optimal choice would be $1$ if $\hat{M}$ were exactly equal to $M^\s$. The value of $\lambda$ is somehow another indicator to measure how $\hat{M}$ is close to $M^\s$.

Now, we consider a butterfly option with payoff function \begin{equation}\label{psi_butterfly}\Psi(S)=2\left(\frac{K_1+K_2}2-S\right)_+ - \left(K_1-S\right)_+ - \left(K_2-S\right)_+.\end{equation}
We have reported our results in Table~\ref{tab:butterfly_loc}. We observe a variance reduction\footnote{In all the tables, we report the standard deviation of the empirical Monte Carlo estimator that gives the precision on the price. The (multiplicative) computational  gain is given by the variance ratio.} up to $100$ on this example. The parameter $\lambda$ used in the control variate method is between $0.97$ and $0.99$ in this experiment. 
\begin{table}[htbp!]
  \centering
  \begin{tabular}{|rrr|rrc||rr|}
    \hline 
    $Q_1$ &  $\bar{N}$\phantom{$\Big|$} &   $P$ & LS Eq.~\eqref{eq:def_tauLS1} & stddev & $\lambda$ &   Dual & stddev \\
    \hline
    $10^5$ &  5 &  50 &      5.6568 & 0.0045 & 0.9743 & 6.1278 & 0.0034 \\
    $5\cdot 10^5$ & 20 &  50 &      5.6563 & 0.0017 & 0.9905 & 5.8726 & 0.0011 \\
    \hline
  \end{tabular}

  \caption{Prices for a butterfly option with parameters $K_1=90$ $K_2=110$, $S_0=95$, $T=0.5$, $r=0.06$, $\sigma=0.4$ and $N=10$, using a basis of~$P$ local functions. The Longstaff-Schwartz algorithm is run with order $6$ polynomials and $Q_2=Q_3=50000$ samples. The LS price without variance reduction is $5.65$, with a {\bf standard deviation} $\mathbf{0.015}$. \label{tab:butterfly_loc}}
\end{table}

\subsubsection{Bermudan options on many assets}

\paragraph{Basket option}

We consider a put option on a basket of assets with payoff function
\[ \Psi(S)=\left(K - \inv{d}\sum_{i = 1}^d S^i \right)_+, \  S \in \R^d. \]
We report in Table~\ref{tab:put3d} the price of a Bermudan option with this payoff for $d=3$. We notice that the variance reduction ranges from a factor $25$ for the martingale $\hat{M}$ obtained with $Q_1=10^5$ to a factor $400$ for the one obtained with $Q_1=5\cdot10^5$.

\begin{table}[htbp!]
  \centering
\begin{tabular}{|rrr|rrc||rr|}
\hline
 $Q_1$ &  $\bar{N}$\phantom{$\Big|$} &   $P$ & LS Eq.~\eqref{eq:def_tauLS1} & stddev & $\lambda$ &   Dual & stddev \\
\hline
$10^5$ &  1 & 50 &      4.0373 & 0.0049 & 0.9782 & 4.3479 & 0.0045 \\
$2.5\cdot10^5$ &  5 & 50 &      4.0415 & 0.0015 & 0.9839 & 4.1517 & 0.0016 \\
$5\cdot10^5$ & 10 & 50 &      4.0424 & 0.0009 & 0.9855 & 4.1148 & 0.0009 \\
\hline
\end{tabular}
    \caption{Prices for a Basket option with $d=3$, $\sigma=0.2$, $r=0.05$, $T=1$, $N=10$, $S^1_0=S^2_0=S^3_0=100$, $K=100$.  LS implemented with polynomials function of order $5$, $Q_2=Q_3=50000$. The LS price without variance reduction is $4.03$, with a {\bf standard deviation} $\mathbf{0.021}$.  \label{tab:put3d}}
\end{table}

\paragraph{Max-call option}

We consider a call option on the maximum of a basket of assets with payoff:
\[ \Psi(S)= \left(K - \max_{1 \le k \le d} S^k\right)_+.\]
We have reported in Table~\ref{tab:maxcall} the price of a Bermudan option with 2 assets. This example has been considered in~\cite[Table 2]{AB}. We observe a variance reduction factor from $100$ to $1600$, which is very satisfactory. We note that the value obtained with the dual algorithm is rather far from the primal value, and  the optimal value of $\lambda$ is around $0.95$. These facts indicate that $\hat{M}$ is quite different from $M^\s$, nonetheless the variance reduction method still works well. 

\begin{table}[htbp!]
  \centering
  \begin{tabular}{|rrr|rrc||rr|}
\hline
 $Q_1$ &  $\bar{N}$\phantom{$\Big|$} &   $P$ & LS Eq.~\eqref{eq:def_tauLS1} & stddev & $\lambda$ &   Dual & stddev  \\
\hline
$10^6$ &  1 & 10 &       8.0675 & 0.0038 & 0.9357 & 8.9877 & 0.0041 \\
$2\cdot 10^6$ &  5 & 10 &       8.0671 & 0.0018 & 0.9480 & 8.5439 & 0.0018 \\
$4\cdot 10^6$ & 10 & 10 &       8.0675 & 0.0012 & 0.9501 & 8.4753 & 0.0011 \\
\hline
\end{tabular}

    \caption{Prices for a max-call option with $d=2$, $K=100$, $S^1_0=S^2_0=90$, $\sigma=0.2$, $\delta=0.1$, $r=0.05$, $T=3$, $N=9$. LS parameters:  polynomial functions of order $5$, $Q_2=Q_3=50000$. The LS price without variance reduction is $8.06$, with a {\bf standard deviation} $\mathbf{0.044}$.\label{tab:maxcall}}
\end{table}


\paragraph{Min Butterflies}

For the last example, we consider a rather exotic option that pays
$$\min_{1\le i\le d}(\Psi(S^i)),$$
where $\Psi$ is the butterfly payoff defined in~\eqref{psi_butterfly}.
Table~\ref{tab:minbutter} reports the prices for such an option on $d=2$ assets. Again, the variance reduction factor is important and ranges from 16 to 400, according to the effort made to calculate~$\hat{M}$. We note that $\lambda \approx 0.93$ is quite far from~$1$ when $M^\s$ is roughly approximated, but gets very close to~$1$ when more computational effort is made for the calculation  of $\hat{M}$.  

\begin{table}[htbp!]
  \centering
 \begin{tabular}{|rrr|rrc||rr|}
\hline
 $Q_1$ &  $\bar{N}$\phantom{$\Big|$} &   $P$ & LS Eq.~\eqref{eq:def_tauLS1} & stddev & $\lambda$ &   Dual & stddev   \\
\hline
$3\cdot 10^5$ &  1 & 10 &      2.1940 & 0.0031 & 0.9292 & 2.6744 & 0.0032  \\
$10^6$ &  5 & 10 &      2.1943 & 0.0013 & 0.9784 & 2.4506 & 0.0012  \\
$4\cdot 10^6$ & 10 & 10 &      2.1945 & 0.0007 & 0.9926 & 2.3891 & 0.0005  \\
\hline
\end{tabular}

    \caption{Prices for a min-butterflies option with $d=2$, $\sigma=0.4$, $r=0.06$, $T=0.5$, $N=10$, $S^1_0=95$ $S^2_0=90$, $K_1=90$, $K_2=110$.  LS implemented with polynomials function of order $6$, $Q_2=Q_3=50000$. The LS price without variance reduction is $2.19$, with a {\bf standard deviation} $\mathbf{0.013}$. \label{tab:minbutter}}
\end{table}

\section{A proxy for the optimal stopping policy}\label{sec:proxy}

All algorithms coupling~\eqref{eq:dpp} and~\eqref{eq:dual-price-k} tend to rely on the optimal policy obtained from the dynamic programming principle to build a dual martingale. Here, we develop a methodology to do it the other way around: use our dual algorithm and the associated martingale $\hat{M}$ to build an optimal policy.
From~\eqref{eq:doob-meyer}, we deduce that for $n \in \{0, \dots, N-1\}$
\begin{equation}
  \label{eq:dual-continuation-value}
  \E[U_{n+1} | \cf_n] =  U_{n+1} + (M^\s_n - M^\s_{n+1}).
\end{equation}
Noting that $\E[Z_{\tau_{n+1}} | \cf_{n}] = \E[\E[Z_{\tau_{n+1}} | \cf_{n+1}] | \cf_{n}] = \E[U_{n+1} | \cf_n]$, the dynamic programming equation for the optimal policy writes
\begin{equation}
  \label{eq:dpp-policy-dual}
  \begin{cases}
    \tau_{N} & = N \\
    \tau_{n} & = n \ind{Z_n \ge U_{n+1} + (M^\s_n - M^\s_{n+1})} + \tau_{n+1} \ind{Z_n < U_{n+1} + (M^\s_n - M^\s_{n+1})}, \quad 0 \le n  \le N-1.
  \end{cases}
\end{equation}
We also have that $U$ solves an other dynamic programming equation
\begin{equation}
  \label{eq:dpp-dual-U}
  \begin{cases}
    U_{N} & = Z_N \\
    U_{n} & = \max(Z_n, U_{n+1} + (M^\s_n - M^\s_{n+1})), \quad 0 \le n  \le N-1.
  \end{cases}
\end{equation}
Note that once $M^\s$ is known, \eqref{eq:dpp-dual-U} is fully explicit and is only useful to solve \eqref{eq:dpp-policy-dual}.

Assume we have $Q$ samples of $(Z^{(q)}_n, M^{\s, (q)}_n)_{0 \le n \le N}$, then we can compute the optimal policy $\tau^{(q)}_1$ along each path $q \in \{1,\dots, Q\}$. So, the primal price can naturally be approximated by Monte Carlo $\hat U^Q_0 = \inv{Q} \sum_{q=1}^Q Z^{(q)}_{\tau^{(q)}_1}$.

From a practical point of view, $M^\s$ is unknown and we can only access some approximation. For instance, we can use the martingale $\hat M$ obtained from~\eqref{eq:hat_M}. However, note that not only \eqref{eq:dual-continuation-value} does not hold anymore if we replace $M^\s$ by $\hat M$, but the quantity $U_{n+1} + (\hat M_n - \hat M_{n+1})$ may not be  even $\cf_n-$~measurable. Hence, the policy obtained using $\hat M$ instead of $M^\s$ may not be a stopping  time anymore. More precisely, let us define the sequence of random times $(\hat{\tau}_n)_{0 \le n \le N}$ by 
\begin{equation}
  \label{eq:dpp-policy-dual-approx}
  \begin{cases}
    \hat{\tau}_{N} & = N \\
    \hat{\tau}_{n} & = n \ind{Z_n \ge \hat{U}_{n+1} + (\hat{M}_n - \hat{M}_{n+1})} + \hat{\tau}_{n+1} \ind{Z_n < \hat{U}_{n+1} + (\hat{M}_n - \hat{M}_{n+1})}, \quad 0 \le n  \le N-1,
  \end{cases}
\end{equation}
where 
\begin{equation*}
  \begin{cases}
    \hat{U}_{N} & = Z_N \\
    \hat{U}_{n} & = \max(Z_n, \hat{U}_{n+1} + (\hat{M}_n - \hat{M}_{n+1})), \quad 0 \le n  \le N-1.
  \end{cases}
\end{equation*}
\begin{proposition}\label{prop_random_time}
  Let $\hat{M}$ be a $\cf$-martingale and $\tau=\inf\{n \ge 0: \hat{U}_n=Z_n \}$. Then, we have $\tau= \hat{\tau}_0$,  $Z_{\tau}-\hat{M}_{\tau}=\max_{0\le n\le N} Z_n-\hat{M}_n$ and
\begin{equation*} 
\E[Z_{\tau}-\hat{M}_{\tau}]=\E[\max_{0\le n\le N} Z_n-\hat{M}_n]\ge U_0.
\end{equation*}
\end{proposition}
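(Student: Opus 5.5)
The proof will use the change of variables $V_n := \hat{U}_n - \hat{M}_n$ and $Y_n := Z_n - \hat{M}_n$. Subtracting $\hat{M}_n$ from the recursion defining $\hat{U}$ gives $V_N = Y_N$ and
\[
V_n = \max\bigl(Z_n - \hat{M}_n,\ \hat{U}_{n+1} - \hat{M}_{n+1}\bigr) = \max(Y_n, V_{n+1}).
\]
By backward induction, $V_n = \max_{n\le j\le N} Y_j$ is the running maximum from the right. In particular $V_0 = \max_{0\le n\le N}(Z_n - \hat{M}_n)$. The stopping condition $Z_n \ge \hat{U}_{n+1} + \hat{M}_n - \hat{M}_{n+1}$ rewrites as $Y_n \ge V_{n+1}$, and $\hat{U}_n = Z_n$ is equivalent to $Y_n = V_n$, which is again $Y_n \ge V_{n+1}$ (at $n=N$ both conditions hold trivially).

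For $\tau = \hat{\tau}_0$, unroll the recursion for $\hat{\tau}$. By induction on $n$ from $N$ downward, $\hat{\tau}_n = \inf\{j \ge n : Y_j \ge V_{j+1}\}$ with the convention $V_{N+1} = -\infty$. Indeed, $\hat{\tau}_n = n$ if $Y_n \ge V_{n+1}$ and $\hat{\tau}_n = \hat{\tau}_{n+1}$ otherwise. Since the conditions $Y_j \ge V_{j+1}$ and $\hat{U}_j = Z_j$ coincide for every $j$, we get $\hat{\tau}_0 = \inf\{n : \hat{U}_n = Z_n\} = \tau$. The infimum is attained at $n \le N$ because $\hat{U}_N = Z_N$.

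For the pathwise identity, at $\tau$ we have $Y_\tau = V_\tau$. For $j < \tau$ the stopping condition fails, so $Y_j < V_{j+1}$ and hence $V_j = V_{j+1}$. Thus $V_0 = V_1 = \dots = V_\tau = Y_\tau$, which gives $Z_\tau - \hat{M}_\tau = \max_{0\le n\le N}(Z_n - \hat{M}_n)$. Taking expectations gives the equality. Integrability holds because the $Z_n$ and $\hat{M}_n$ are square integrable and there are finitely many indices.

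For the inequality, apply the dual representation~\eqref{eq:dual-price-k} at $n=0$ with $M = \hat{M}$. Here one should note that $\hat{M}$ is taken in $\H^2$, consistent with the construction~\eqref{eq:hat_M}; if $\hat{M}_0 \ne 0$, it only shifts both sides by the same constant $\hat{M}_0$, so one may assume $\hat{M}_0 = 0$, or $\cf_0$ trivial. This yields $U_0 \le \E[\max_{0\le j\le N}(Z_j - \hat{M}_j + \hat{M}_0)]$. A self-contained alternative: for any stopping time $\sigma$, $\E[Z_\sigma] = \E[Z_\sigma - \hat{M}_\sigma + \hat{M}_0] \le \E[\max_j (Z_j - \hat{M}_j)] + \E[\hat{M}_0]$ by optional stopping; then take $\sigma = \tau^\s$. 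The only delicate point is this normalization of $\hat{M}_0$, together with remarking that $\tau$ need not be a stopping time, which is why the bound is $\ge U_0$ and not an equality. The rest is routine induction.
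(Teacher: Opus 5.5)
Your proof is correct and follows the paper's argument essentially step for step. The common steps are the identity $\hat U_n-\hat M_n=\max_{n\le p\le N}(Z_p-\hat M_p)$, backward induction to identify $\hat\tau_0$ with $\tau$, the fact that $\hat U-\hat M$ is constant before $\tau$, and weak duality from \eqref{eq:dual-price-k}. Your remark on $\hat M_0$ makes explicit something the paper leaves implicit: the normalization $\hat M_0=0$, which does hold for the construction \eqref{eq:hat_M}, is genuinely needed for the inequality $\ge U_0$ (take $\hat M\equiv c$ with $c$ large) and is not just a harmless shift.
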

\begin{proof}
At first, note that  $\hat{U}$ may not be adapted to the filtration $\cf$. Thus, $(\hat{\tau}_n)_n$ may not define stopping times and $\hat{U}$ may not be a Snell envelope. 
However, we still have $\hat{U}_n-\hat{M}_n=\max(\hat{Z}_n-\hat{M}_n, \hat{U}_{n+1} - \hat{M}_{n+1})$ and thus by induction \begin{equation}\label{eq:max_courant}
\hat{U}_n-\hat{M}_n=\max_{n\le p\le N} Z_p-\hat{M}_p.
\end{equation}
We then check that $\hat{\tau}_n = \inf \{p\ge n:  \hat{U}_p-\hat{M}_p=Z_p-\hat{M}_p \}$ by backward induction. This is true for $n=N$. For $n<N$, if $\hat{\tau}_n>n$, then from~\eqref{eq:dpp-policy-dual-approx} we have $Z_n- \hat{M}_n<\hat{U}_{n+1}-\hat{M}_{n+1}$ and $\hat{\tau}_n=\hat{\tau}_{n+1}$ which gives the claim by the induction assumption. Otherwise, $\hat{\tau}_n=n$ and thus  $Z_n- \hat{M}_n\ge  \hat{U}_{n+1}-\hat{M}_{n+1}$, which gives  $Z_n- \hat{M}_n =  \hat{U}_{n}-\hat{M}_{n}$ by~\eqref{eq:max_courant}.  Therefore, 
$\tau=\inf\{n\ge 0 : \hat{U}_n-\hat{M}_n=Z_n-\hat{M}_n \}=\hat{\tau}_0$.
Equation~\eqref{eq:max_courant} then gives $Z_{\tau}-\hat{M}_{\tau}\ge Z_p-\hat{M}_p$ for $p\ge \tau$. By the last formula for $\tau$, we have $\hat{U}_{n}-\hat{M}_n=\hat{U}_{n+1}-\hat{M}_{n+1}$ for $n<\tau$ and thus $\hat{U}_\tau-\hat{M}_\tau=\hat{U}_{n}-\hat{M}_n>Z_n-\hat{M}_n$. Therefore we have
$$Z_{\tau}-\hat{M}_{\tau}=\max_{0\le n\le N} Z_n-\hat{M}_n,$$
and  the claim follows by the dual formulation~\eqref{eq:dual-price-k}. 
\end{proof}Note that Proposition~\ref{prop_random_time} holds for any $\cf$-martingale $\hat{M}$. When we take for $\hat{M}$ the approximation of $M^\s$ given by the dual algorithm~\eqref{eq:hat_M}, then $\E[Z_{\hat{\tau}_0}-\hat{M}_{\hat{\tau}_0}]$ coincides with the obtained dual price given by this algorithm. 
In the particular case when $\hat{M}=M^\s$, we have $\hat{\tau}_0=\tau^\s$ and therefore $\hat{\tau}_0-\tau^\s$ can be seen in general as a measure of the quality of the approximating martingale~$\hat{M}$. Of~course, $\tau^\s$ is not known exactly, but we use a Longstaff Schwartz algorithm to approximate it. Thus, in Figure~\ref{fig:diff-tau}, we plot the histograms of $\hat{\tau}_0-\hat{\tau}^{LS_1,Q_2}$ for two different martingales $\hat{M}$ on the one dimensional Put option example of Subsection~\ref{sec:var-1d}. We observe, as we may expect, that $\P(\hat{\tau}_0=\hat{\tau}^{LS_1,Q_2})$ is relatively high and more generally the random time $\hat{\tau}_0$ is close to the stopping time $\hat{\tau}^{LS_1,Q_2}$.

\begin{figure}
  \includegraphics[width=0.45\linewidth]{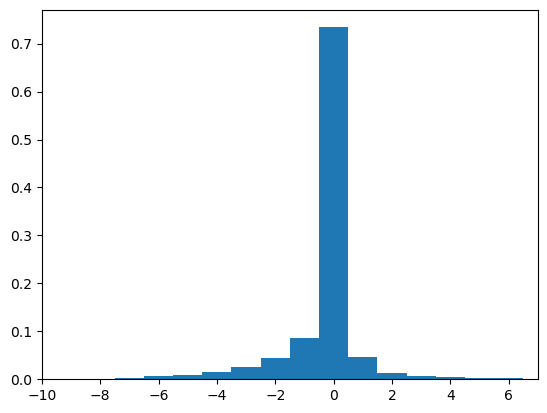} \includegraphics[width=0.45\linewidth]{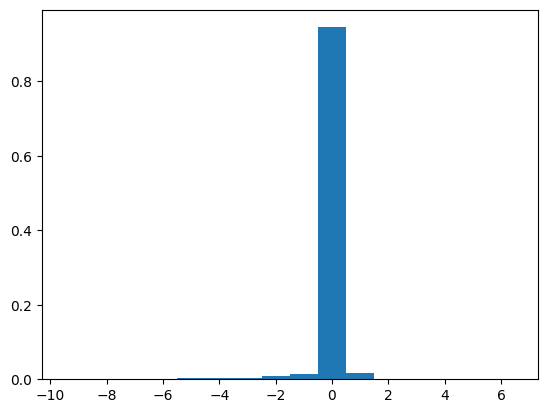}
  \caption{Histogram of $\hat{\tau}_0-\hat{\tau}^{LS_1,Q_2}$ on the Put option example obtained with $50000$ samples, $P=50$ and $Q_2=50000$, using the martingale $\hat{M}$ obtained with $\bar{N}=20$ subticks (left, with $Q_1=2\times 10^6$) or using the martingale $\hat{M}$ that includes the European call option (right, with $Q_1=10^5$), see Equation~\eqref{def_mg_incr}. }\label{fig:diff-tau}
\end{figure}

\appendix

\section{Variance decomposition in the Longstaff Schwartz algorithm}
\label{app:variance}

From \citet[Theorem~4.5]{LEL19} or~\citet[Theorem~4.10]{lapeyre:hal-02183587}, we get the following results on the convergence of the estimator~\eqref{eq:prix_empirique} when $Q_2=Q_3$. First, for every $n=1,\dots,N$,
\begin{equation}
  \label{eq:empirical-var}
  \lim_{Q \to \infty} \quad \inv{Q} \sum_{q=1}^Q \left(Z_{\hat{\tau}_n^{LS_2, Q, (q)}}^{(q)}\right)^2 - \left(\inv{Q} \sum_{q=1}^Q Z_{\hat{\tau}_n^{LS_2, Q, (q)}}^{(q)} \right)^2 = \Var(Z_{\tau_n^{LS_2}}) \quad a.s.
\end{equation}
The convergence rate analysis carried out in~\cite{CLP} applies steadily to our approach. Then, under suitable assumptions, the vector
\begin{equation*}
  \left(\sqrt{Q} \left(\inv{Q} \sum_{q=1}^Q Z_{\hat{\tau}_n^{LS_2, Q, (q)}}^{(q)} - \E[Z_{\tau^{LS_2}_n}] \right)\right)_{n=1,\dots, N}
\end{equation*}
converges in law to a Gaussian vector. As noted in~\citet[Theorems~4.2 and~4.3]{CLP}, determining the asymptotic variance directly from the data generated by a single run of the algorithm is almost impossible. From the proof of the central limit theorem for their algorithm, we have that
\begin{equation*}
  \sqrt{Q} \left(\inv{Q} \sum_{q=1}^Q Z_{\hat{\tau}_n^{LS_2, Q, (q)}}^{(q)} - \E[Z_{\tau_n^{LS_2}}] \right) \xrightarrow[Q \to \infty]{\mathcal{L}} G_1+G_2,
\end{equation*}
where $(G_1,G_2)$ is a Gaussian vector. We have $G_1\sim \mathcal{N}(0,\Var(Z_{\tau_n^{LS_2}}))$ as the limit of $\sqrt{Q} \left(\inv{Q} \sum_{q=1}^Q   Z_{\tau_n^{LS_2, (q)}}^{(q)} -\E[Z_{\tau_n^{LS_2}}]  \right)$  by the standard central limit theorem. The variance of $G_2$  is related to the approximation of $\tau_n^{LS_2}$ by $\hat{\tau}_n^{LS_2,Q}$. Then, using the empirical variance of the estimator as a measurement of the algorithm convergence misses part of the variance since from~\eqref{eq:empirical-var}, we know that the empirical variance only takes into account $\Var(G_1)$. Yet, on our numerical examples, the predominant part of the variance comes from $G_1$: in the put example of Table~\ref{Table:Put_local}, we have obtained $\Var(G_1)=\Var(Z_{\tau_n^{LS_2}})\approx 0.0506$ while our estimation of the whole variance is $0.0513$.

\section{Calculation of $\hat{M}$: the algorithm}\label{Sec:CalcM}

First, we rewrite~\eqref{eq:dual-price-k} as follows
\begin{align}
  U_0&=  \E[Z_N] + \inf_{\Delta M_1 \in \ch_1^2,\dots,\Delta M_N \in \ch_N^2 }  \sum_{n=0}^{N-1}\E\left[ \left(Z_{n} + \Delta M_{n+1} - \max_{n + 1\le j \le N} \left\{Z_{j} - \sum_{i=n+2}^{j} \Delta M_i\right\} \right)_+\right], \label{global_min}
\end{align}
where $\Delta M_n=M_n-M_{n-1}$ and
\[
  \ch^2_n = \{ Y \in \L^2(\Omega) \,:\, Y \text{ is real valued, }\cf_{n}-\text{measurable and } \E[Y | \cf_{n-1}] = 0\}.
\]
The aim of~\eqref{global_min} is to decompose the global optimization problem into a sequence of smaller ones. However, the lack of strict convexity of $(\cdot)_+$ does not allow to make this rigorous as well as to have a unique minimum. In~\cite{AKL}, we consider the same minimisation problem, when the positive part is replaced by a strictly convex function and obtain in particular the following result. 

\begin{theorem}\label{thm:main}
   Let  $M^\s \in \H^2$ be defined by~\eqref{eq:doob-meyer}.  Then, we have for any $M\in \H^2$
 \begin{align*}
  &\E[Z_N] + \sum_{n=0}^{N-1} \E\left[ \left(Z_{n} + \Delta M_{n+1} - \max_{n + 1\le j \le N} \left\{Z_{j} - \sum_{i=n+2}^{j} \Delta M_i\right\} \right)^2\right]\\
  &\ge \E[Z_N] + \sum_{n=0}^{N-1} \E\left[  \left(Z_{n} + \Delta M^\s_{n+1} - \max_{n + 1\le j \le N} \left\{Z_{j} - \sum_{i=n+2}^{j} \Delta M^\s_i\right\} \right)^2\right],\notag
 \end{align*}  
 and $M^\s$ is the unique solution of the following sequence of backward optimization problems for $n=N-1,\dots,0$
 \begin{equation} \label{eq:inf}
 \inf_{\Delta M_{n+1} \in \ch^2_{n+1}}\E\left[  \left(Z_{n} + \Delta M_{n+1}  - \max_{n + 1\le j \le N} \left\{Z_{j} - \sum_{i=n+2}^{j} \Delta M_i\right\}\right)^2\right].
 \end{equation}
 \end{theorem}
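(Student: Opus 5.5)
The plan is to first identify, for $M=M^\s$, the random variable inside the $n$-th term. After that, the backward problems \eqref{eq:inf} reduce to elementary $L^2$ projections, and the global inequality is handled by a conditional-variance argument.

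\emph{Step 1 (pathwise identity for $M^\s$).} Set $X^{M}_{n+1}:=\max_{n+1\le j\le N}\{Z_j-\sum_{i=n+2}^j\Delta M_i\}$. I will show by backward induction that $X^{M^\s}_{n+1}=U_{n+1}$. Since $A^\s$ is predictable, the Doob--Meyer decomposition \eqref{eq:doob-meyer} gives
\[
\Delta M^\s_{n+1}=U_{n+1}-\E[U_{n+1}|\cf_n].
\]
Hence $U_{n+2}-\Delta M^\s_{n+2}=\E[U_{n+2}|\cf_{n+1}]$. This is exactly the recursion \eqref{eq:dpp-dual-U}, i.e. $U_{n+1}=\max(Z_{n+1},U_{n+2}-\Delta M^\s_{n+2})$. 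Unrolling it gives $U_{n+1}=X^{M^\s}_{n+1}$, as in \eqref{eq:max_courant}.

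\emph{Step 2 (backward problems and uniqueness).} I argue by backward induction on $n$. Assume the increments $\Delta M_{n+2},\dots,\Delta M_N$ have already been identified as those of $M^\s$. By Step 1 the target in \eqref{eq:inf} is then the $\cf_{n+1}$-measurable variable $U_{n+1}$. For $\Delta M_{n+1}\in\ch^2_{n+1}$ write
\[
Z_n+\Delta M_{n+1}-U_{n+1}=\bigl(Z_n-\E[U_{n+1}|\cf_n]\bigr)+\bigl(\Delta M_{n+1}-(U_{n+1}-\E[U_{n+1}|\cf_n])\bigr).
\]
The first bracket is $\cf_n$-measurable and the second has zero $\cf_n$-conditional mean, so the cross term vanishes. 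The expectation therefore equals $\E[(Z_n-\E[U_{n+1}|\cf_n])^2]+\E[(\Delta M_{n+1}-\Delta M^\s_{n+1})^2]$. This is uniquely minimized, almost surely, by $\Delta M_{n+1}=\Delta M^\s_{n+1}$. Since $M_0=0$, induction yields uniqueness of $M^\s$.

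\emph{Step 3 (global inequality).} For a general $M\in\H^2$, the variable $X^M_{n+1}$ is only $\cf_N$-measurable. Because $\E[\Delta M_{n+1}|\cf_n]=0$, the conditional Jensen inequality gives the lower bound
\[
\E\bigl[(Z_n+\Delta M_{n+1}-X^M_{n+1})^2\bigr]\ge \E\bigl[(Z_n-\E[X^M_{n+1}|\cf_n])^2\bigr]+\E\bigl[\Var(\Delta M_{n+1}-X^M_{n+1}\,|\,\cf_n)\bigr].
\]
Optional stopping gives $X^M_{n+1}\ge Z_\tau-(M_\tau-M_{n+1})$ for every $\tau\in\ct_{n+1,N}$, so $\E[X^M_{n+1}|\cf_n]\ge \E[U_{n+1}|\cf_n]=:C_n$.

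\emph{Step 4 (main obstacle).} On $\{Z_n\le C_n\}$ the bound above already dominates the $M^\s$-term $(Z_n-C_n)^2$. On $\{Z_n>C_n\}$, however, raising $\E[X^M_{n+1}|\cf_n]$ toward $Z_n$ could lower term $n$, so the comparison cannot be done term by term. The comparison will need to exploit the following: the excess $\E[X^M_{n+1}-U_{n+1}|\cf_n]>0$ can only be produced by later increments deviating from $M^\s$. Such deviations are paid for in the terms $n+1,\dots,N-1$, through the conditional-variance parts and through the same identity as in Step 2 applied one step later.

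I would try to set up a backward induction on a quantity of the form $\sum_{k\ge n}\E[(Z_k+\Delta M_{k+1}-X^M_{k+1})^2]$, compared with its $M^\s$ counterpart plus a penalty in $\E[(X^M_{n+1}-U_{n+1})^2]$. The aim is to show that the penalty absorbs the possible gain on $\{Z_n>C_n\}$. Making this telescoping estimate close is the step I expect to be hardest. If it fails, I would fall back on the strictly convex reformulation treated in \cite{AKL}.
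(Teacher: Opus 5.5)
Your Steps 1 and 2 are correct and fully prove the second assertion. Once $\Delta M_{n+2},\dots,\Delta M_N$ coincide with those of $M^\s$, the target in \eqref{eq:inf} is $U_{n+1}$, and the objective splits as $\E[(Z_n-C_n)^2]+\E[(\Delta M_{n+1}-\Delta M^\s_{n+1})^2]$. So $\Delta M^\s_{n+1}$ is the unique minimizer; this is the quadratic case of the strict conditional Jensen argument that works for any strictly convex function. The gap is Step 4, which you leave open. It cannot be closed, because the first assertion is false with the square. Take $N=2$, $\cf_0=\cf_1$ trivial, $\cf_2=\sigma(\xi)$ with $\P(\xi=1)=\P(\xi=-1)=1/2$, and deterministic $Z_0=a>0$, $Z_1=Z_2=0$. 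Then $U_1=U_2=0$, $U_0=a$, $M^\s\equiv 0$, and the right-hand side equals $a^2$. Now take $\Delta M_1=0$ and $\Delta M_2=\epsilon\xi$. The $n=1$ term is $\epsilon^2$. Since $X^M_1=\max(0,-\epsilon\xi)$, the $n=0$ term is $\frac12 a^2+\frac12(a-\epsilon)^2$. The left-hand side is therefore
\[ a^2-a\epsilon+\tfrac32\epsilon^2<a^2 \qquad\text{for } 0<\epsilon<2a/3. \]

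This is exactly the mechanism you suspected. On $\{Z_n>C_n\}$, the excess $\E[X^M_{n+1}-U_{n+1}|\cf_n]$ can be of first order in the perturbation (here $\epsilon/2$). Term $n$ multiplies it by $2(Z_n-C_n)$. The cost in the later term is exactly $\E[(\Delta M_{n+2}-\Delta M^\s_{n+2})^2]$, because the cross term vanishes, and any penalty like $\E[(X^M_{n+1}-U_{n+1})^2]$ (here $\epsilon^2/2$) is only of second order. So no telescoping estimate can absorb the gain.

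Falling back on \cite{AKL} will not help either, because global optimality of $M^\s$ belongs to the positive-part functional, not to a strictly convex one. From
$X^M_n=X^M_{n+1}-\Delta M_{n+1}+(Z_n+\Delta M_{n+1}-X^M_{n+1})_+$, the sum with $(\cdot)_+$ telescopes to $\E[\max_{0\le j\le N}(Z_j-M_j)]$ for $M_0=0$. This is at least $U_0$ by your optional stopping bound, and equals $U_0$ at $M^\s$ by Step 1. In the example, both $M^\s$ and the perturbed $M$ give exactly $a$: the gain $\epsilon/2$ in term $0$ is cancelled by the cost $\epsilon/2$ in term $1$.

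What can actually be proved is the global inequality \eqref{global_min} with $(\cdot)_+$, together with your backward characterization for the square. The squared global inequality should be dropped, not proved.
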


 From a practical point of view, using Theorem~\ref{thm:main} to approximate $M^\s$ requires to consider finite dimensional approximations $\ch^{2, pr}_n$ of $\ch^2_n$ for each~$n\in \{1,\dots,N\}$ and to replace the expectations by sample averages to effectively solve the least squares problem.

 We assume that the sub-vector spaces $\ch^{2, pr}_n$, $1\le n \le N$, are spanned by $L \in \N^*$ random variables $\Delta X_{n,\ell} \in \ch^2_n$, $1\le \ell \le L$:
$$\ch^{pr}_n=\Span\left(\left\{ \alpha \cdot \Delta X_{n} \ : \ \alpha \in \R^L \right\}\right).$$
Here, $L$ does not depend on $n$ for simplicity, and we write $\alpha \cdot \Delta X_{n}= \sum_{\ell=1}^L \alpha_\ell \Delta X_{n,\ell}$. Then, the minimisation problem~\eqref{eq:inf} becomes, for $0\le n \le N-1$,
\begin{equation*}
  \inf_{\alpha \in \R^L }\E\left[ \left(Z_{n} + \alpha \cdot \Delta X_{n+1}  - \max_{n + 1\le j \le N} \left\{Z_{j} - \sum_{i=n+2}^{j} \Delta M_i\right\}\right)^2\right].
\end{equation*} 
This is a standard least squares optimisation problem: if the positive semidefinite matrix $\E[\Delta X_{n+1}\Delta X_{n+1}^T]$ is invertible, the minimum is given by
\begin{align}
   \alpha_{n+1} &= \left( \E[\Delta X_{n+1}\Delta X_{n+1}^T] \right)^{-1} \E\left[ \left( \max_{n + 1\le j \le N} \left\{Z_{j} - \sum_{i=n+2}^{j} \Delta M_i\right\} -Z_n \right) \Delta X_{n+1}\right] \notag\\
   &= \left( \E[\Delta X_{n+1}\Delta X_{n+1}^T] \right)^{-1} \E\left[ \left( \max_{n + 1\le j \le N} \left\{Z_{j} - \sum_{i=n+2}^{j} \Delta M_i\right\} \right) \Delta X_{n+1}\right], \label{def_alpha_n}
\end{align}
since $\E[\Delta X_{n+1}|\cf_n]=0$. 

Assume that we have $Q$ independent paths $Z^{(q)}_n$ of the underlying process $Z_n$ and $\Delta X^{(q)}_n$ of the martingale increments $\Delta X_n$, for $1\le n \le N$ and $1\le q\le Q$. 
If the matrix $\sum_{q=1}^Q \Delta X^{(q)}_{n+1} (\Delta X^{(q)}_{n+1})^T$ is invertible, then we define $\alpha^Q_{n+1}$ as the Monte Carlo estimator of~\eqref{def_alpha_n} by
\begin{equation}\label{def_alpha_n_Q}
  \alpha_{n+1}^Q=\left( \sum_{q=1}^Q \Delta X^q_{n+1} (\Delta X^q_{n+1})^T \right)^{-1} \sum_{q=1}^Q \max_{n + 1\le j \le N} \left\{ Z^q_{j} - \sum_{i=n+2}^{j} \alpha^Q_i \cdot \Delta X^q_i \right\} \Delta X^q_{n+1}
\end{equation}

From~\cite[Proposition~3]{AKL}, $\alpha_{n}^Q\to_{Q\to \infty} \alpha_n$ a.s., so it is natural to consider the following approximation of $M^\s$ 
\begin{equation}
  \label{eq:hat_M}
  \hat M_n = \sum_{\ell=1}^{n} \alpha^Q_\ell \cdot \Delta X_n.
\end{equation}
Since perfect hedging is only attainable in continuous time when a martingale representation theorem holds, we suggest to use a sub-grid to parametrise the martingale increments. Each interval $[T_i, T_{i+1}]$ for $0 \le i \le  N-1 $ is split into $\bar{N}$ regular sub-intervals, and we set 
\begin{equation*}
   t_{i,j}=T_i+\frac{j}{\bar{N}} \frac{T}{N}, \text{ for }0 \le j \le \bar{N}.\end{equation*}
We consider a family of functions $u_p:\R^d \to \R$ for $p\in \{1,\dots,\bar{P}\}$ and a family of discounted assets $(\ca^k)_{1\le k\le \bar{d}}$. Then, we define the  following elementary martingale increments:
\begin{equation}\label{def_mg_incr}
  X^{{p},k}_{t_{i,j}} - X^{{p},k}_{t_{i,j-1}} =  u^{p}_{i,j-1} (S_{t_{i,j-1}}) (\ca^k_{t_{i,j}}- \ca^k_{t_{i,j-1}}).
\end{equation}
 In this paper, we consider for $u^p_{i,j}$ indicator functions with $P$ distinct intervals for each dimension (local basis) and martingales constructed only with the underlying asset i.e.  $\bar{d}=d$ and  $\ca^k=\tilde{S}^k$. Other choices with polynomial functions for regression and  European option claims in the hedging martingale are presented in~\cite{AKL}.

\bibliographystyle{abbrvnat}
\bibliography{biblio.bib}
\end{document}